\newenvironment{myproof}{{{\noindent \textsc{Proof By Contradiction}:}}}{\hfill$\square$}
\newtheorem{lemm}{Lemma}
\newtheorem{resul}{Result}
\newcommand{\be}{\begin{equation}}
\newcommand{\ee}{\end{equation}}
\newcommand{\bq}{\begin{eqnarray}}
\newcommand{\eq}{\end{eqnarray}}
\newcommand{\bea}{\begin{eqnarray}}
\newcommand{\eea}{\end{eqnarray}}
\newcommand{\ba}{\begin{align}}
\newcommand{\ea}{\end{align}}
\newcommand{\I}{\mathbb {I}}
\newcommand{\1}{\mathbbm{1}}
\newcommand{\ket}[1]{\left |  #1 \right\rangle}
\newcommand{\bra}[1]{\left \langle #1 \right |}
\newcommand{\proj}[1]{\ket{#1}\!\bra{#1}}
\newcommand{\tr}{\textrm{tr}}
\newcommand{\cT}{\mathcal{T}}
\newcommand{\rhobig}{\textrm{\large $\rho$}}
\newcommand{\sigmabig}{\textrm{\large $\sigma$}}
\newcommand{\tth}[0]{\textsuperscript{{th}}}
\newcommand{\qedsymbol}{\hfill $\square$ \vspace{2 mm}}
\DeclareMathAlphabet{\matheu}{U}{eus}{m}{n}
\newcommand{\eqn}[1]{\hyperref[eqn:#1]{Eq.~(\ref*{eqn:#1})}}
\newcommand{\eqrange}[2]{Eqs. (\ref{eqn:#1}--\ref{eqn:#2})}
\renewcommand{\sec}[1]{\hyperref[sec:#1]{Section~\ref*{sec:#1}}}
\newcommand{\thm}[1]{\hyperref[thm:#1]{Theorem~\ref*{thm:#1}}}
\newcommand{\lem}[1]{\hyperref[lem:#1]{Lemma~\ref*{lem:#1}}}
\newcommand{\prop}[1]{\hyperref[prop:#1]{Proposition~\ref*{prop:#1}}}
\newcommand{\cor}[1]{\hyperref[cor:#1]{Corollary~\ref*{cor:#1}}}
\newcommand{\fig}[1]{\hyperref[fig:#1]{Figure~\ref*{fig:#1}}}
\newcommand{\app}[1]{\hyperref[app:#1]{Appendix~\ref*{app:#1}}}
\newcommand{\defi}[1]{\hyperref[defi:#1]{Definition~\ref*{defi:#1}}}
\newcommand{\clm}[1]{\hyperref[clm:#1]{Claim~\ref*{clm:#1}}}
\newcommand{\res}[1]{\hyperref[res:#1]{Result~\ref*{res:#1}}}
\newcommand{\sop}[1]{{\mathcal #1}}
\def\qed{\leavevmode\unskip\penalty9999 \hbox{}\nobreak\hfill
     \quad\hbox{\leavevmode  \hbox to.77778em{%
               \hfil\vrule   \vbox to.675em%
               {\hrule width.6em\vfil\hrule}\vrule\hfil}}
     \par\vskip3pt}
    {\hspace*{\fill}$\Box$\vspace{1.5ex}\par}
\newcommand{\oZ}{\hat{S}}
\newcommand{\no}{\nonumber\\}
\begin{document}

\title{\sc{\Large A Quantum Version of Sch\"oning's Algorithm Applied to Quantum 2-SAT}}

\author{Edward Farhi}
\affiliation{Center for Theoretical Physics, Massachusetts Institute of Technology, Cambridge, MA 02139, USA}
\author{Shelby Kimmel}
\affiliation{Center for Theoretical Physics, Massachusetts Institute of Technology, Cambridge, MA 02139, USA}
\affiliation{Joint Center for Quantum Information and Computer Science (QuICS), University of Maryland,
College Park, MD 20742, USA}
\author{Kristan Temme}
\affiliation{Center for Theoretical Physics, Massachusetts Institute of Technology, Cambridge, MA 02139, USA}
\affiliation{Institute for Quantum Information and Matter, California Institute of Technology, Pasadena CA 91125, USA}
\affiliation{IBM TJ Watson Research Center, Yorktown Heights, NY 10598, USA}
\date{\today}

\begin{abstract}  
We study a quantum algorithm that consists of a
simple quantum Markov process, and we analyze its behavior on restricted
versions of Quantum 2-SAT.  We prove that the algorithm solves this
decision problem with
high probability for $n$
qubits, $L$ clauses, and promise gap $c$ in time ${\cal O}(n^2 L^2 c^{-2})$. If the
Hamiltonian is additionally polynomially gapped, our algorithm efficiently produces
a state that has high overlap with the satisfying subspace. The
Markov process we study is a quantum
analogue of Sch\"oning's probabilistic algorithm for $k$-SAT.
\end{abstract} \maketitle

\thispagestyle{fancy} % options: empty , plain , fancy
\renewcommand{\headrulewidth}{0pt}
\rhead{MIT-CTP/4789}
\section{Introduction}\label{sec:intro}

For the $n$-bit classical constraint satisfaction problem $k$-SAT,
several algorithms beat the exhaustive search
runtime bound of $2^n$. They provide a runtime with a mildly
exponential scaling, $O(r^n)$ with $r < 2$. One such algorithm
is Sch\"oning's probabilistic algorithm that finds a solution of 3-SAT
in time $O(1.334^n)$~\cite{Schoennig}. The algorithm
works by exploring the solution space using a simple Markov process.
Although variants of the algorithm had been known for some time
\cite{minton1992minimizing,papadimitriou}, Sch\"oning was the first
to prove the runtime bound for $k\geq 3$. For 2-SAT,
Papadimitriou earlier introduced a variant of this algorithm that
finds a satisfying assignment (if there is one)
in time $O(n^2)$
~\cite{papadimitriou}. While linear-time 2-SAT algorithms 
exist~\cite{even1975complexity,aspvall1979linear}, Papidimitriou's
algorithm is admired for its simplicity. 

Quantum $k$-SAT is the quantum generalization of the classical
$k$-SAT problem. Analogously to classical $k$-SAT, 
Quantum 3-SAT is $\mbox{QMA}_1$-complete
\cite{Gosset}, while Quantum 2-SAT can be solved in polynomial
time \cite{B06}. Interestingly, existing algorithms for Quantum 
2-SAT have paralleled algorithms for classical 2-SAT: 
Bravyi's original algorithm for Quantum 2-SAT
is similar to Krom's algorithm for classical 
2-SAT~\cite{krom1967decision}
and uses inference rules; and two recent linear-time 
algorithms for Quantum 2-SAT \cite{dBG15,AS+15} use
ideas from linear-time classical 2-SAT algorithms
\cite{even1975complexity,aspvall1979linear}. 

In this work, we describe an algorithm that is a quantum
analogue of Papidimitriou's classical algorithm and analyze its
behavior on restricted versions of Quantum 2-SAT. Like the classical
algorithm, our quantum version consists of repeated 
applications of a simple (quantum) Markov process. 
As with the recent linear-time Quantum 2-SAT algorithms, we apply
tools and intuition from the classical algorithm to analyze
the quantum version. However, our algorithm is a quantum
algorithm; past algorithms for Quantum 2-SAT have been
classical. Since Sch\"oning showed that 
the classical version of this algorithm performs well for classical
$k$-SAT with $k>2$, there is hope that the quantum version will have 
success on Quantum $k$-SAT with $k>2$. Therefore, we think understanding
this quantum Markov process in the case of $k=2$ is of value.

Papidimitriou's classical algorithm for 2-SAT takes
as input the number of bits $n$, 
a set of clauses $\sop I$, and a real parameter $b>0$, where
$b$ is chosen
depending on the desired probability of success. Then
the algorithm is as follows: \\

{\it Classical Algorithm}$(n, \sop I, b)$
\begin{itemize}
\item Pick a string $s$ uniformly at random from $\{0,1\}^n$.
\item  Repeat $bn^2$ times: 
\begin{itemize}
\item If there exist clauses in $\sop I$ that are not satisfied on $s$, randomly choose one of the unsatisfied
 clauses, and then randomly choose one of the bits in that clause. Flip the value of that bit
 and rename $s$ to be the new string with the flipped bit.
 \item If $s$ satisfies all clauses, return $s$ and terminate.
\end{itemize} 
\item If $s$ does not satisfy all clauses, return ``No satisfying string found.''
\end{itemize}
If there is no satisfying assignment, the algorithm will always return ``No satisfying string found.''
 If a satisfying string exists, this algorithm will return a satisfying assignment with probability $p$,
where $(1-p)\propto b^{-1}$.

 The quantum algorithm that we consider is the natural generalization
 of this procedure to the quantum domain for the problem Quantum
 $k$-SAT, which is the natural generalization of Classical $k$-SAT to the
 quantum domain. We now give the definition of Quantum $k$-SAT on $n$ qubits as it
 was introduced by Bravyi (altered to include only rank-1 projectors)
  \cite{B06}:

%\begin{definition}[Quantum k-SAT]
\vspace{5mm}\noindent{\bf Definition [Quantum k-SAT]} {\it
Let  $c = \Omega(n^{-g})$  with $g$ a positive constant. Given a set
of $L$ rank one projectors (called ``clauses'') ${\Phi}_\alpha =
\proj{\phi_\alpha}$ each supported on $k$ out of $n$ qubits, define 
\begin{align}
H=\sum_{\alpha=1}^L\Phi_\alpha.
\end{align}
One must
decide between the following two cases:
\begin{enumerate}
 \item The YES instance: There exists an $n$-qubit state $\rhobig$ that satisfies $\tr[H\rhobig] = 0$.
\item The NO instance: For any $n$-qubit state $\rhobig$, we have that  $\tr[H\rhobig] \geq c$.
\end{enumerate} }\vspace{5mm}
%\label{defi:ksat}
%\end{definition}

%Bravyi \cite{B06} showed that there is a polynomial time classical
%algorithm for Quantum 2-SAT, and he also showed that Quantum $k$-SAT
%for $k\geq 4$ is complete for $\mbox{QMA}_1$. It was also shown 
%that Quantum 3-SAT is $\mbox{QMA}_1$ complete
%\cite{Gosset}. Recently, two classical algorithms 
%have been shown to solve Quantum 2-SAT in linear time, improving
%on Bravyi's original analysis \cite{AS+15,dBG15}.\\

We now give a quantum algorithm for Quantum $k$-SAT on $n$ qubits, but
in this paper we focus on $k=2$. The quantum algorithm takes as input
the number of qubits  $n$, a set of $L$ clauses $\sop
I=\{\Phi_\alpha\}$, and two positive integers $N$ and $T$, where
$N\leq T$. $N$ and $T$ are chosen based on the desired probability of
success.  The clauses can be given either via a classical description,
or operationally, as measurement projectors. Then the algorithm is as
follows:\\

\vspace{.1cm}
{\it Quantum Algorithm}$(n,\sop I,N,T)$
\begin{itemize}
\item Initialize the system in the maximally mixed state of $n$ qubits.
\item Initialize a counter $N_0$ to equal 0. 
\item Repeat $T$ times:
\begin{itemize}
\item Choose $\alpha$ uniformly at random from $\{1,\dots,L\}$, and
measure $\Phi_\alpha$. If outcome 1 is measured, choose one of
the qubits in the support of ${\Phi}_\alpha$ at random and apply a
Haar random unitary to that qubit. If outcome 0 is measured, set
$N_0=N_0+1$.
\end{itemize} 
\item If $N_0\geq N$ decide you are in a YES instance. Otherwise,
decide NO. 
\end{itemize}

One might expect that an algorithm for Quantum $k$-SAT first prepares
a low energy state, and then estimates the energy of the
state using, for example, phase estimation. In our work we use the repeated
measurements of clauses to fulfill both roles. We prepare the low energy state by repeatedly
measuring clauses and applying random unitaries if the clauses are
unsatisfied. We test whether the state has low energy by
tracking the number of satisfied outcomes. We will show that if, over repeated measurements, most of the outcomes
are satisfied, then we have a low energy state.

Variants of this algorithm have been analyzed previously in different
contexts. A similar algorithm was proposed to prepare graph states and
Matrix Product States dissipatively \cite{Frank}, and a variant was
used as a tool for the constructive proof of a quantum local Lov\'asz
lemma for commuting projectors \cite{AS13,SCV13}.

Given a YES instance of Quantum $2$-SAT, since Quantum 2-SAT is in $P$, 
one might expect that the
{\it Quantum Algorithm} will converge to a satisfying state in
polynomial time. We show that this is
indeed the case, at least for a restricted set of clauses. Chen et al. 
\cite{CCD+11} showed that for every YES instance of Quantum 2-SAT, there
is always a satisfying assignment that  is a product of single- and two-qubit states.
In fact, with the restricted clause set that we consider, there
will be a satisfying single-qubit product state of the form:
\begin{align}
\ket{\psi_1}_1\otimes \cdots\otimes\ket{\psi_n}_n
\end{align}
where the ket $\ket{\cdot}_i$ denotes the state
of the $i\tth$ qubit. For ease of notation, for YES instances, we use the following basis:
\begin{align}
\ket{0}_i=\ket{\psi_i}_i.
\end{align}
Hence, for the rest of this paper, $\ket{0}^{\otimes n}$ does not refer to the standard basis
state, but to an unknown product state that satisfies all clauses of a 
Quantum 2-SAT instance. 
In the basis where $\ket{0}^{\otimes n}$ is a satisfying state,
 all of the clauses are of the form \\

\noindent{\bf General Clauses:} 
\be \label{eqn:general} 
\Phi_\alpha=\proj{\phi_\alpha}, \hspace{.2cm} \textrm{ with } 
\ket{\phi_\alpha}=a_\alpha\ket{01}_{i,j}+
b_\alpha\ket{10}_{i,j}+c_\alpha\ket{11}_{i,j}, 
\ee

\noindent where  $i,j$ label the two qubits in the clause $\Phi_\alpha$.
For reasons that we will discuss later, we can only prove that the
{\it Quantum Algorithm} succeeds in polynomial time if in the YES
instance the clauses are restricted to have $c_\alpha=0$.
In the NO case, the clauses have no restrictions.  We call this
problem {\bf Restricted Quantum 2-SAT}, and we show that the {\it
Quantum Algorithm} can succeed in this setting when $T=O(L^4n^2/c^2)$.
This restriction can be somewhat relaxed, and in \app{extended-clause}, we show
that the algorithm succeeds in polynomial time if in the YES instance every clause satisfies
either $c_\alpha=0$ or $a_\alpha=b_\alpha=0$. So for now we work
with\\

\noindent{\bf Restricted Clauses:} 
\be \label{eqn:restrict} 
\Phi_\alpha=\proj{\phi_\alpha}, \hspace{.2cm} \textrm{ with } 
\ket{\phi_\alpha}=a_\alpha\ket{01}_{i,j}+b_\alpha\ket{10}_{i,j}.
\ee

\noindent Note that $\ket{0}^{\otimes n}$ and $\ket{1}^{\otimes n}$ are both
satisfying states with the restricted clause set. 

%{\bf Shelby's Note: This paragraph can be taken out. I'm afraid it is not very clear...}
%We observe that Restricted Quantum 2-SAT reduces simply to Classical
%2-SAT. To see this, we show that a YES instance of Restricted Quantum
%2-SAT reduces to a YES instance of Classical 2-SAT. If a clause is not
%separable (i.e. if $a_\alpha\neq 0$ and $b_\alpha\neq 0$), then using
%Schmidt decomposition we can put the clause into the form of
%\eqn{restrict}. Because there is a solution that is a product state,
%we can replace this single clause with two separable clauses,
%$\Phi_{\alpha_1}=\proj{01}_{ij}$ and $\Phi_{\alpha_1}=\proj{10}_{ij}$, which must
%both be satisfied by the solution. After performing this procedure for
%all non-separable clauses, we have at most $2L$ separable clauses
%which all have the form $\proj{01}$ or $\proj{10}$, (and additionally,
%the form $\proj{11}$ when clauses with $a_\alpha=b_\alpha=0$ are included), 
%which is the problem
%Classical 2-SAT.

In addition to solving Restricted Quantum 2-SAT, in the YES case 
the {\it Quantum Algorithm} produces a state
that has high overlap with a satisfying assignment. In this setting,
the smallest eigenvalue of $H$ is  0, and we call $\epsilon$ the size of
the smallest non-zero eigenvalue of $H$. We show that after running
the {\it Quantum Algorithm} for $T=O(n^2L/\epsilon)$ steps, the
resultant state will have large overlap with a state $\rhobig$ that has
$\tr[H\rhobig]=0$. 

The {\it{Quantum Algorithm}} may solve arbitrary Quantum 2-SAT
instances in polynomial time, but our analysis can only show that it
succeeds in polynomial time on Restricted Quantum 2-SAT. On the other
hand, Bravyi's algorithm and recent linear-time
quantum algorithms \cite{AS+15,dBG15} give procedures for  deciding all Quantum
2-SAT instances in polynomial time, but are classical algorithms.
Our algorithm is a quantum algorithm, so our analysis techniques may
be of broader interest. In particular, our approach may have
applications to Quantum $k$-SAT for $k>2.$

%%%%%%%%%%%%%%%%%%%%%%%%%%%%%%%%%%%%%%%%%%
%%%%%%%%%%%%%%%%%%%%%%%%%%%%%%%%%%%%%%%%%%
%%%%%%%%%%%%%%%%%%%%%%%%%%%%%%%%%%%%%%%%%%

\section{Analysis of the Quantum Algorithm for Restricted Quantum 2-SAT}\label{sec:analysis}

On a YES instance, the {\it Quantum Algorithm} can be viewed as a quantum Markov process that converges
to a quantum state that is annihilated by all the clauses. A quantum
Markov process is described by a completely positive trace preserving
(CPTP)  map
\cite{davies1976quantum}. Call $\rhobig_t$ the state of the system
at time $t$. The CPTP map $\cT$
describes the update of $\rhobig_t$ at each step of the chain, so
$\rhobig_{t+1} = \cT(\rhobig_t)$. 

Call $\cT_{\alpha}$ the map that describes the procedure of
checking whether clause $\Phi_{\alpha}$ is satisfied, and if it is
not satisfied, applying a random unitary to one of the qubits in the
support of $\Phi_{\alpha}$. Let $i$ and $j$ be the two qubits associated
with clause $\Phi_\alpha.$ Then
\begin{align}
 \cT_{\alpha}(\rhobig)  =\left(\1{-}\Phi_{\alpha} \right) \rhobig \left(\1 {-} \Phi_{\alpha} \right) + 
\textrm{\small $\frac{1}{2}$}\Lambda_i(\Phi_{\alpha} \rhobig \Phi_{\alpha})+\textrm{\small $\frac{1}{2}$}\Lambda_j(\Phi_{\alpha} \rhobig \Phi_{\alpha})
\end{align}
where $\Lambda_i$ is the unitary twirl map acting on qubit $i$:
\begin{align}
\Lambda_i(\rhobig)=\int d[U_i] U_{i}\rhobig U_{i}^\dagger =\frac{\1_i}{2}\otimes\textrm{tr}_i\left[\rhobig\right],
\end{align}
and $d[U_i]$ is the Haar measure.
At each time step, we choose $\alpha$ from $\{1,\dots,L\}$ uniformly
and random and apply the map $\cT_{\alpha}$. This corresponds to the
CPTP update map
\begin{align}
\cT(\rhobig)  =&\frac{1}{L} \sum_{\alpha=1}^L \cT_{\alpha}(\rhobig) \label{eqn:cTdef}.
\end{align}

During the measurement step, when $\alpha$ is chosen uniformly at random
and one measures $\Phi_\alpha$, the probability of obtaining outcome
$1$ at time $t$ is 
\begin{align}
\frac{1}{L}\sum_\alpha\tr[\Phi_\alpha\rho_t]=\frac{1}{L}\tr[H\rho_t].
\end{align}

%%%%%%%%%%%%%%%%%%%%%%%%%%%%%%%%%%%%%%%%%%
%%%%%%%%%%%%%%%%%%%%%%%%%%%%%%%%%%%%%%%%%%
%%%%%%%%%%%%%%%%%%%%%%%%%%%%%%%%%%%%%%%%%%

\subsection{Expectation of Total Spin}
In analyzing the classical algorithm, Papadimitriou and Sch\"oning
kept track of the Hamming distance between the current string
and the satisfying assignment. Inspired by this idea, we find
it useful to analyze the expectation value of
$\oZ$ and $\oZ^2$, where $\oZ$ is twice the total spin:

\begin{align}
\oZ=\sum_{i=1}^n\sigmabig^z_i \hspace{.5cm}\textrm{and}\hspace{.5cm}\oZ^2=\sum_{i,j=1}^n\sigmabig^z_i\sigmabig^z_j.
\end{align}

\noindent Note that $\oZ$ is closely related to the quantum Hamming weight
operator $\sum_{i=1}^n\frac{1}{2}(1-\sigmabig^z_i)$.

We show that with the restricted clause set, the expectation value of $\oZ$ is constant under the action
of $\cT$, whereas the expectation value of $\oZ^2$ can not decrease under the action of
$\cT$.

\begin{lemm}\label{lem:slem}
Given a set of restricted clauses $\{\Phi_1,\dots,\Phi_L\}$ (i.e. all of the form of \eqn{restrict}), with 
$\cT$ defined as in \eqn{cTdef}, then
\begin{align}
{\rm tr}[\oZ\cT(\rhobig)]-{\rm tr}[\oZ\rhobig]&=0\\
{\rm tr}[\oZ^2\cT(\rhobig)]-{\rm tr}[\oZ^2\rhobig]&=
\frac{2}{L} \sum_{\alpha}{\rm tr}[\Phi_{\alpha}\rhobig]
\geq 0.\label{eqn:Zsqrd_change}
\end{align}
\end{lemm}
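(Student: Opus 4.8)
The plan is to pass to the Heisenberg picture and exploit locality. Since $\cT=\frac1L\sum_\alpha\cT_\alpha$ is linear it suffices to treat one $\cT_\alpha$ and then average, and since $\tr[O\,\cT_\alpha(\rhobig)]=\tr[\cT_\alpha^\dagger(O)\rhobig]$ with the twirl self-adjoint, $\Lambda_i^\dagger=\Lambda_i$, I would work with the dual map
\[
\cT_\alpha^\dagger(O)=(\1-\Phi_\alpha)\,O\,(\1-\Phi_\alpha)+\tfrac12\,\Phi_\alpha\Lambda_i(O)\Phi_\alpha+\tfrac12\,\Phi_\alpha\Lambda_j(O)\Phi_\alpha,
\]
evaluated on $O=\oZ$ and $O=\oZ^2$. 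Writing $\oZ=Z_{ij}+R$ with $Z_{ij}=\sigmabig^z_i+\sigmabig^z_j$ on the two clause qubits and $R=\sum_{k\ne i,j}\sigmabig^z_k$ on the rest, and using that $\cT_\alpha^\dagger$ is unital and acts as the identity on operators supported off $\{i,j\}$, the problem reduces to computing $\cT_\alpha^\dagger(Z_{ij})$ and $\cT_\alpha^\dagger(Z_{ij}^2)$ on the two-qubit space.

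For the first line I would use that on the clause qubits $Z_{ij}=2(\proj{00}-\proj{11})$ vanishes on the $\{\ket{01},\ket{10}\}$ subspace containing $\ket{\phi_\alpha}$. Because $c_\alpha=0$, both $\ket{00}$ and $\ket{11}$ are orthogonal to $\ket{\phi_\alpha}$, so $\1-\Phi_\alpha$ leaves them fixed and the projection term returns $Z_{ij}$ unchanged; this is the step that forces the restriction. The twirls give $\Lambda_i(Z_{ij})=\sigmabig^z_j$ and $\Lambda_j(Z_{ij})=\sigmabig^z_i$, and sandwiching with $\Phi_\alpha$ produces $(|b_\alpha|^2-|a_\alpha|^2)\Phi_\alpha$ and $(|a_\alpha|^2-|b_\alpha|^2)\Phi_\alpha$, which cancel. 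Hence $\cT_\alpha^\dagger(Z_{ij})=Z_{ij}$, and after averaging, $\tr[\oZ\cT(\rhobig)]=\tr[\oZ\rhobig]$.

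For the second line, the identity $\cT_\alpha^\dagger(Z_{ij})=Z_{ij}$ just obtained makes the cross term inert: with $R$ untouched and $\cT_\alpha^\dagger(R^2)=R^2$ by unitality, one gets $\cT_\alpha^\dagger(\oZ^2)-\oZ^2=\cT_\alpha^\dagger(Z_{ij}^2)-Z_{ij}^2$. On the two qubits $Z_{ij}^2=4(\proj{00}+\proj{11})$, again fixed by the projection term. The decisive simplification is that partial-twirling makes $\Lambda_i(Z_{ij}^2)=\Lambda_j(Z_{ij}^2)=2\,\1$, so each twirl term contributes $\tfrac12\Phi_\alpha(2\,\1)\Phi_\alpha=\Phi_\alpha$ and $\cT_\alpha^\dagger(Z_{ij}^2)=Z_{ij}^2+2\Phi_\alpha$. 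Thus $\tr[\oZ^2\cT_\alpha(\rhobig)]-\tr[\oZ^2\rhobig]=2\,\tr[\Phi_\alpha\rhobig]$; averaging over $\alpha$ gives \eqn{Zsqrd_change}, and nonnegativity follows from $\tr[\Phi_\alpha\rhobig]\ge0$.

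The main obstacle is bookkeeping rather than any hard inequality: one has to check carefully that partial-twirling $Z_{ij}$ and $Z_{ij}^2$ on a single qubit yields the claimed single-qubit operator and the identity, and that the two $\pm(|a_\alpha|^2-|b_\alpha|^2)$ terms cancel. The conceptual point worth isolating is why $c_\alpha=0$ is indispensable: it is exactly what keeps $\ket{00}$ and $\ket{11}$, the support of both $Z_{ij}$ and $Z_{ij}^2$, orthogonal to the clause, so the measurement/projection step never disturbs the total-spin observables. A fully equivalent Schr\"odinger-picture route would instead evaluate $\tr[Z_{ij}\,\cT_\alpha(\rhobig)]$ and $\tr[Z_{ij}^2\,\cT_\alpha(\rhobig)]$ directly, using $\Phi_\alpha\rhobig\Phi_\alpha=\proj{\phi_\alpha}_{ij}\otimes(\bra{\phi_\alpha}_{ij}\rhobig\ket{\phi_\alpha}_{ij})$.
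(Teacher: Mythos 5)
Your proposal is correct and follows essentially the same route as the paper's proof: both pass to the dual map $\cT_\alpha^\dagger$, isolate the two clause qubits, and use the fact that $c_\alpha=0$ forces $\Phi_\alpha(\sigmabig^z_i+\sigmabig^z_j)=(\sigmabig^z_i+\sigmabig^z_j)\Phi_\alpha=0$ together with the single-qubit twirl identities to obtain $\cT_\alpha^\dagger(\oZ)=\oZ$ and $\cT_\alpha^\dagger(\oZ^2)=\oZ^2+2\Phi_\alpha$ before averaging over $\alpha$. The only cosmetic difference is that you package the two-qubit part as $Z_{ij}$ and $Z_{ij}^2=2\,\1+2\sigmabig^z_i\sigmabig^z_j$, whereas the paper tracks $\sigmabig^z_i\sigmabig^z_j$ and the cross terms $\sigmabig^z_k(\sigmabig^z_i+\sigmabig^z_j)$ directly.
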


\begin{proof}
Let $\cT^\dagger$ be the dual of $\cT$, so that
\begin{align}
 \tr[\oZ\cT(\rhobig)]= \tr[\cT^\dagger(\oZ)\rhobig]\hspace{.5cm}\textrm{and}\hspace{.5cm}
  \tr[\oZ^2\cT(\rhobig)]= \tr[\cT^\dagger(\oZ^2)\rhobig].
 \end{align} 
 First consider
 \begin{align}
 \cT^\dagger_\alpha(\oZ)=(1{-}\Phi_{\alpha})\oZ(1{-}\Phi_{\alpha})
 +\textrm{\small $\frac{1}{2}$}\Phi_{\alpha}\Lambda_i(\oZ)\Phi_{\alpha}
 +\textrm{\small $\frac{1}{2}$}\Phi_{\alpha}\Lambda_j(\oZ)\Phi_{\alpha},
 \end{align}
where $i,j$ are the two qubits where $\Phi_{\alpha}$ acts. Note
that $\oZ-\sigmabig_{i}^z-\sigmabig_{j}^z$ is invariant
under the action of $\cT^\dagger_\alpha$, so
 \begin{align} \label{eqn:expand-oZ}
\cT_\alpha^\dagger(\oZ) 
=&\oZ-\sigmabig_{i}^z-\sigmabig_{j}^z
+\left(\1 {-} \Phi_{\alpha} \right)(\sigmabig_{i}^z+\sigmabig_{j}^z) \left(\1 {-} \Phi_{\alpha} \right) \no 
+& \textrm{\small $\frac{1}{2}$}\Phi_{\alpha}\Lambda_{i}(\sigmabig_{i}^z+\sigmabig_{j}^z) \Phi_{\alpha} + 
 \textrm{\small $\frac{1}{2}$}\Phi_{\alpha}\Lambda_{j}(\sigmabig_{i}^z+\sigmabig_{j}^z) \Phi_{\alpha}.
\end{align}

\noindent Due to the special properties of the restricted clauses, c.f.
\eqn{restrict}, we have
\begin{align}\label{eqn:typeIprops}
\Phi_{\alpha} (\sigmabig_{i}^z+\sigmabig_{j}^z) =(\sigmabig_{i}^z+\sigmabig_{j}^z)\Phi_{\alpha} =0,
\end{align}
for all $\alpha$, which together with $\Lambda_i(\sigmabig_i^z)=0$ and $\Lambda_i(\sigmabig_j^z)=\sigmabig_j^z$
for $i\neq j$ gives
\begin{align}
\cT_\alpha^\dagger(\oZ)=\oZ.
\end{align}
This implies
\begin{align}
\cT^\dagger(\oZ)=\oZ,
\end{align}
so we see that the expectation value of $\oZ$ is unchanged by the action of $\cT$ on a state:
\begin{align}
\tr[\oZ\cT(\rhobig)]=\tr[\oZ\rhobig].
\end{align}

The expectation value of $\oZ^2$ does change under the action of $\cT$.  $\Phi_\alpha$ acts only on qubits
$i$ and $j$, so accordingly we break up $\oZ^2$ as
\begin{align}\label{eqn:S2divide}
\oZ^2=&{\bigg[}\oZ^2-2\sigmabig_i^z\sigmabig_j^z-
2\sum_{k\neq i,j}\sigmabig_k^z(\sigmabig_i^z+\sigmabig_j^z)\bigg]\no
&+\bigg[2\sigmabig_i^z\sigmabig_j^z+
2\sum_{k\neq i,j}\sigmabig_k^z(\sigmabig_i^z+\sigmabig_j^z)\bigg].
\end{align}
$\cT_\alpha^\dagger$ leaves the first term unchanged. Now 
 \begin{align}
 \cT^\dagger_\alpha(\sigmabig_i^z\sigmabig_j^z)=(1{-}\Phi_{\alpha})\sigmabig_i^z\sigmabig_j^z(1{-}\Phi_{\alpha})
 +\textrm{\small $\frac{1}{2}$}\Phi_{\alpha}\Lambda_i(\sigmabig_i^z\sigmabig_j^z)\Phi_{\alpha}
 +\textrm{\small $\frac{1}{2}$}\Phi_{\alpha}\Lambda_j(\sigmabig_i^z\sigmabig_j^z)\Phi_{\alpha}.
 \end{align}
 Because of the special properties of the clauses, c.f. \eqn{restrict}, we have
  \begin{align}\label{eqn:typeIprops2}
\Phi_{\alpha} \sigmabig_{i}^z\sigmabig_{j}^z=
\sigmabig_{i}^z\sigmabig_{j}^z\Phi_{\alpha}=-\Phi_{\alpha}.
\end{align}

\noindent Using \eqn{typeIprops} and that $\Lambda_i(\sigmabig_i^z)=0$, we have
\begin{align}
 \cT^\dagger_\alpha(\sigmabig_i^z\sigmabig_j^z)=
 \sigmabig_i^z\sigmabig_j^z+\Phi_\alpha.
\end{align}

\noindent Now notice
\begin{align}\label{eqn:S2final}
\cT_\alpha^\dagger(\sigmabig_k^z(\sigmabig_i^z+\sigmabig_j^z))
=& (1{-}\Phi_{\alpha})\sigmabig_k^z(\sigmabig_i^z+\sigmabig_j^z)(1{-}\Phi_{\alpha})\no
 &+\textrm{\small $\frac{1}{2}$}\Phi_{\alpha}\Lambda_i(\sigmabig_k^z(\sigmabig_i^z+\sigmabig_j^z))\Phi_{\alpha}\no
 &+\textrm{\small $\frac{1}{2}$}\Phi_{\alpha}\Lambda_j(\sigmabig_k^z(\sigmabig_i^z+\sigmabig_j^z))\Phi_{\alpha}\no
 =&\sigmabig_k^z(\sigmabig_i^z+\sigmabig_j^z).
\end{align}
where we have again used \eqn{typeIprops}. Putting the pieces together gives
\be \cT_\alpha^\dagger(\oZ^2)=\oZ^2+2\Phi_{\alpha}. 
\ee

\noindent The change in the expectation value of $\oZ^2$ after the action of $\cT$ is thus
  
\begin{align}
\tr[\oZ^2\cT(\rhobig)]-\tr[\oZ^2\rhobig]=
\frac{2}{L} \sum_{\alpha}\tr[\Phi_{\alpha}\rhobig]
\geq 0.\label{eqn:Z2_change}
\end{align}
\qedsymbol
\end{proof}

%%%%%%%%%%%%%%%%%%%%%%%%%%%%%%%%%%%%%%%%%%
%%%%%%%%%%%%%%%%%%%%%%%%%%%%%%%%%%%%%%%%%%
%%%%%%%%%%%%%%%%%%%%%%%%%%%%%%%%%%%%%%%%%%
\subsection{Runtime of the {\it Quantum Algorithm} to Decide Restricted Quantum 2-SAT}\label{sec:decide}

The {\it Quantum Algorithm} decides between YES and NO cases based on
the number of $0$-valued outcomes, i.e. satisfied projectors, obtained
during the algorithm. The
probability of getting a $0$-outcome at step $t$ is
\begin{align}
1-\frac{1}{L}\tr[H\rhobig_t],
\end{align}
and so depends on the expectation value of $H.$ \eqn{Z2_change} allows
us to relate the expectation value of $H$ to the expectation 
value of $\oZ^2$. While the expectation value of $H$ is not necessarily monotonic
over the course of the algorithm, the expectation value of $\oZ^2$ is
monotonic (by \lem{slem}) and is also bounded, since the maximum
eigenvalue of $\oZ^2$ on $n$ qubits is $n^2$. We use these properties of $\oZ^2$
to track the expectation value of $H$ over the course of the algorithm, and hence
to track the expected number of $0$-valued outcomes.

We analyze the YES and NO cases seperately.
\begin{resul}\label{res:T1overlap}
Suppose we have a {\em YES} case of Restricted Quantum 2-SAT, and we run the {\em Quantum Algorithm}
for time
\begin{align}
T&=\frac{f^2L^2n^2}{2},\label{eqn:Tdef}
\end{align}
where
\begin{align}
f=\max\left\{\frac{7}{c},1\right\},
\end{align}
then we have at least a $2/3$ probability of observing at least $N$ measurement outcomes
with value $0$ over the course of the algorithm, where
\begin{align}
N=T\left(\frac{fL-1}{fL}\right)^3-fLn.\label{eqn:Neqn}
\end{align}
\end{resul}
The choice of $f=\max\left\{\frac{7}{c},1\right\}$ is not used in
this proof, but is rather important for the soundness analysis. We include
it here for concreteness.\\

\begin{proof}
We start by using \lem{slem} to bound the expectation value of $H$
over the course of the algorithm. $0\leq\tr[\oZ^2\rhobig ]\leq n^2$
for any state $\rhobig$ on $n$ qubits and so for any $T$
\begin{align}
n^2&\geq\tr[\oZ^2\rhobig_T ]-\tr[\oZ^2\rhobig_0] \no
&=\sum_{t=0}^{T-1}\left( \tr[\oZ^2\rhobig_{t+1}]-\tr[\oZ^2\rhobig_{t}] \right)\no
&= \frac{2}{L}\sum_{t=0}^{T-1}\sum_\alpha\tr[\Phi_{\alpha}\rhobig_t]\nonumber\\
&= \frac{2}{L}\sum_{t=0}^{T-1} \tr[ H \rhobig_t].\label{eqn:sumboth1}
\end{align}

Let $\Pi_{f}$ be the projector onto the eigenstates of $H$ with eigenvalue less than
$1/f$. We define
\begin{align}
\textrm{\large $p$}_{t,f}=\tr[\Pi_{f}\rhobig_t].
\end{align}

Inserting the projector $\I-\Pi_{f}$ into the last line of \eqn{sumboth1},
we have
%\begin{align}
%n^2&= \sum_{t=0}^{T-1} \frac{2}{L}\tr[ H( \Pi_{f}^++\Pi_{f})\rhobig_t]\no
%&\geq \sum_{t=0}^{T-1} \frac{2}{fL} \textrm{\large $p$}_{t,f+}
%\end{align}
\begin{align}
n^2&\geq\frac{2}{L} \sum_{t=0}^{T-1} \tr[ H(\I-\Pi_{f})\rhobig_t]\no
&\geq \frac{2}{fL}\sum_{t=0}^{T-1}  (1-\textrm{\large $p$}_{t,f})
\end{align}
where we used that $\rhobig_t$ has probability $1-\textrm{\large $p$}_{t,f}$ of being
in the subspace $\I-\Pi_{f}$, and states in this subspace have
expectation value of $H$ at least $1/f$. Rearranging terms gives
\begin{align}
\sum_{t=0}^{T-1}\textrm{\large $p$}_{t,f}\geq T-\frac{fLn^2}{2},
\end{align}
and using \eqn{Tdef} gives
\begin{align}
\sum_{t=0}^{T-1}\textrm{\large $p$}_{t,f}\geq \frac{fL-1}{fL}T.
\end{align}

By the pigeon hole principle, there is a set of 
times $\mathbb T$ such that the following are true:
\begin{align}
\textrm{\large $p$}_{t,f}&\geq \frac{fL-1}{fL}\textrm{ }\text{ for }t\in \mathbb T, \textrm{ }\text{ and }\label{eqn:pbound}\\
%\end{align}
%\vspace{-.2cm}
%and
%\begin{align}
|\mathbb T|&\geq \frac{fL-1}{fL}T.\label{eqn:Tbound}
\end{align}

At any time $t$, the probability of obtaining outcome 0 is
\begin{align}
1-\frac{1}{L}\sum_{\alpha=1}^L\tr[\Phi_\alpha\rhobig_t]&=
1-\frac{1}{L}\tr[H((\I-\Pi_{f})+\Pi_{f})\rhobig_t].
\end{align}
Since $H$ is a sum of $L$ projectors, its eigenvalues are at most $L$, so we have
\begin{align}
\tr[H(\I-\Pi_{f})\rhobig_t]\leq L(1-\textrm{\large $p$}_{t,f}).
\end{align} 
$\Pi_{f}$ projects onto states with eigenvalue less than $1/f$, so
\begin{align}
\tr[H\Pi_{f}\rhobig_t]<\frac{1}{f}\textrm{\large $p$}_{t,f}.
\end{align} 
Plugging these in gives
\begin{align}
1-\frac{1}{L}\sum_{\alpha=1}^L\tr[\Phi_\alpha\rhobig_t]&
\geq\frac{fL-1}{fL} \textrm{\large $p$}_{t,f}.
\end{align}

Now assume $t\in\mathbb T,$ so \eqn{pbound} holds. Then 
we have for these times that the probability of obtaining
outcome 0 is
\begin{align}
1-\frac{1}{L}\sum_{\alpha=1}^L\tr[\Phi_\alpha\rhobig_t]&\geq
\left(\frac{fL-1}{fL}\right)^2
\end{align}

Since we want a large number of $0$-outcomes over the course of the
algorithm, we will assume a worst case scenario such that the
probability of outcome 0 for all times $t\in \mathbb T$ is 
\begin{align}
\textrm{\large $p$}_{\rm{worst}}=\left(\frac{fL-1}{fL}\right)^2.\label{eqn:pworst}
\end{align}
In this case, the distribution of $0$-outcomes for times $t\in \mathbb T$
is given by a binomial distribution. We can use bounds on the binomial
cumulative distribution function to bound the number of $0$-outcomes
in this worst case scenario. Let $G$ be the probability that less than
$N$ outcomes are 0 over $|\mathbb T|$ times, where $\textrm{\large $p$}_{\rm{worst}}$ is the
probability of obtaining outcome $0$ at any time. Using Hoeffding's
bound, we have that
\begin{align}
G\leq \exp \left[ \frac{-2(|\mathbb T|\textrm{\large $p$}_{\rm{worst}}-N)^2}{|\mathbb T|}\right], \label{eqn:Geqn}
\end{align}
as long as $|\mathbb T|\textrm{\large $p$}_{\rm{worst}}\geq N$.
Using \eqn{Tbound} and \eqn{pworst}, we have
\begin{align}
|\mathbb T|\textrm{\large $p$}_{\rm{worst}}\geq\left(\frac{fL-1}{fL}\right)^3T.
\end{align}
Using \eqn{Neqn},
we see that 
\begin{align}
|\mathbb T|\textrm{\large $p$}_{\rm{worst}}- N\geq fLn,
\end{align}
so the numerator of the exponent in \eqn{Geqn} satisfies 
\begin{align}
2(|\mathbb T|\textrm{\large $p$}_{\rm{worst}}-N)^2\geq 2f^2L^2n^2.
\end{align}
Finally, the denominator of the exponent in \eqn{Geqn} satisfies
\begin{align}
\mathbb T&\leq T \no
&= \frac{f^2L^2n^2}{2},
\end{align}
so we have 
\begin{align}
G\leq \exp \left[ -4\right]\leq 1/3. \label{Geqn}
\end{align}
Thus with probability at least 2/3, we expect to see at least 
$N$ outcomes with value $0$ for times $t\in \mathbb T$.
Considering times $t$ with $1\leq t\leq T$ rather than only times $t\in \mathbb T$ only
gives more opportunities for $0$-outcomes, so we have probability of at least 
$2/3$ of seeing $N$ outcomes with value $0$ when the algorithm is run
for time $T$.
\qedsymbol
\end{proof}

Now we prove an analogous result in the NO case:
\begin{resul}\label{res:T2overlap}
Recall that in the NO case, the size of the smallest eigenvalue of $H$
is promised to be $c$. If we run the
algorithm for time
\begin{align}
T&=\frac{f^2L^2n^2}{2},\label{eqn:Tdef2}
\end{align}
and choose
\begin{align}
f=\max\left\{\frac{7}{c},1\right\},
\end{align}
then we have at most a $1/3$ probability of observing more than $N$ measurement outcomes
with value $0$ over the course of the algorithm, where, as in \res{T1overlap},
\begin{align}
N=T\left(\frac{fL-1}{fL}\right)^3-fLn.\label{eqn:Nbound2}
\end{align}
\end{resul}

\begin{proof}
We show that if we have a NO case, we are unlikely to have more than 
$N$ measurements with outcome 0 over the course of the $T$ applications of $\cT$. 
In the NO case, the probability of obtaining outcome 0 at time $t$ is
\begin{align}
1-\frac{1}{L}\sum_{\alpha=1}^L\tr[\Phi_\alpha\rhobig_t]\leq 1-\frac{c}{L}.
\end{align}
The worst case is when for all times $t$, the probability of obtaining outcome 0 is 
\begin{align}
\textrm{\large $q$}_{\rm{worst}}=1-\frac{c}{L}.\label{eqn:qworst}
\end{align}
This worst case scenario corresponds to a binomial distribution.
We use bounds on the binomial distribution to bound the probability 
of at least $N$ outcomes with value $0$.
Let $\sop G$ be the probability of getting at least $N$ outcomes
with value $0$ over $T$ steps, where $\textrm{\large $q$}_{\rm{worst}}$ is the probability of obtaining outcome $0$
at any step. 
Applying Hoeffding's bound to the binomial distribution, we have
\begin{align}
\sop G\leq \exp\left[\frac{-2(N-T\textrm{\large $q$}_{\rm{worst}})^2}{T}\right]\label{eqn:Hoeff}
\end{align}
as long as $N\geq T\textrm{\large $q$}_{\rm{worst}}$. We now show that $\sop G$ is small. 

We first analyze the term $N-T\textrm{\large $q$}_{\rm{worst}}$ from \eqn{Hoeff}. 
We have, using \eqn{Nbound2}, 
\begin{align}
N-T\textrm{\large $q$}_{\rm{worst}}&=T\left(\frac{fL-1}{fL}\right)^3-fLn-T\left(1-\frac{c}{L}\right).
\end{align}
Since $fL\geq 1$, we have
\begin{align}
N-T\textrm{\large $q$}_{\rm{worst}}&\geq T\left(1-\frac{3}{fL}\right)-fLn-T\left(1-\frac{c}{L}\right)\no
%&=T\frac{c}{L}-T\frac{3}{fL}-fLn\no
&=\frac{1}{2}cf^2Ln^2-\frac{3}{2}fLn^2-fLn\no
&\geq fLn^2\left(\frac{cf}{2}-\frac{5}{2}\right)
\end{align}
where in the second to last line we used \eqn{Tdef2}, and in the last
line we used that $n\geq 1.$ Setting $f=\max\{7/c,1\},$ we have
\begin{align}
N-T\textrm{\large $q$}_{\rm{worst}}&\geq fLn^2,
\end{align}
where the maximum over the two terms is used to ensure $f\geq 1.$
Then the numerator in \eqn{Hoeff} satisfies
\begin{align}
2(N-T\textrm{\large $q$}_{\rm{worst}})^2\geq 4Tn^2.
\end{align}
Plugging into \eqn{Hoeff} we have
\begin{align}
\sop G\leq \exp[-4n^2]\leq 1/3.
\end{align}
Therefore, the probability of getting at least $N$ outcomes with value
0 is less than $1/3$.
\qedsymbol
\end{proof}

Combining \res{T1overlap} and \res{T2overlap}, to solve Restricted
Quantum $2$-SAT, we set 
\begin{align}
f=\max\left\{\frac{7}{c},1\right\}
\end{align} 
and run the algorithm for time
\begin{align}
T&=\frac{f^2L^2n^2}{2}.
\end{align}
We count the number of 0-outcomes over the course of the algorithm, and check whether 
this is greater than
\begin{align}
N=T\left(\frac{fL-1}{fL}\right)^3-fLn.
\end{align}
We have shown that for a YES instance, there is at least a 2/3 probability of observing 
at least $N$
outcomes with value $0$, but for a NO instance, there is at most a 1/3
probability of doing so.

%%%%%%%%%%%%%%%%%%%%%%%%%%%%%%%%%%%%%%%%%%
%%%%%%%%%%%%%%%%%%%%%%%%%%%%%%%%%%%%%%%%%%
%%%%%%%%%%%%%%%%%%%%%%%%%%%%%%%%%%%%%%%%%%
\subsection{Runtime to Produce a Ground State}

Suppose we have a Hamiltonian with restricted clauses that is additionally
polynomially gapped. In other words,
the smallest non-zero eigenvalue of the Hamiltonian has size
$\Omega(1/\textrm{poly}(n))$. Then we show that repeatedly
applying the map $\sop T$ produces a state that has large overlap with
the ground subspace.

%In \res{T1overlap}, (see \eqn{pbound} and
%\eqn{Tbound}), we find that for long enough times, the system
%often has high overlap with the low energy (less than $1/f$)
%subspace. While in \res{T1overlap} we chose $f$ based on $c$, the
%promise gap, the arguments hold for any choice of $f>1.$ If
%the smallest non-zero eigenvalue of $H$ is $\epsilon$, then by running
%the {\it Quantum Algorithm} long enough, we can force the system to
%often be in a state with energy less than $\epsilon$. Since the only
%available eigenvalue less than $\epsilon$ is 0, the state of the
%system will in fact spend most of its time in the satisfying subspace.

\begin{resul}\label{res:T3overlap}
Given clauses $\{\Phi_{\alpha}\}$ where $\Phi_{\alpha}
= \proj{\phi_\alpha}$ are restricted as in \eqn{restrict}, and
$\epsilon$ is the size of the smallest non-zero eigenvalue of
$H=\sum_\alpha\Phi_{\alpha}$, then for
$T\geq\frac{n^2L}{2(1-p)\epsilon}$, $\rhobig_{T}=\cT^{T}(\rhobig_0)$
has a fidelity $\tr[ \Pi_0\rhobig_{T}]$ with the ground state subspace that is greater than
$p$.
\end{resul}

\begin{myproof}

Let $\Pi_0$ be
the projector onto the satisfying subspace:
\begin{align}
\tr[H\Pi_0]=0.
\end{align}
We first show that $\Pi_0$ is a fixed point of the map $\cT$, so once part of the state
is in this subspace, it stays there. That is,
\begin{align}\label{eqn:Pi0inc}
\tr[\Pi_0\rhobig_{t+1}]-\tr[\Pi_0\rhobig_t]=
&\frac{1}{L}\sum_\alpha
\tr\Big[\Pi_0\left(\1 {-} \Phi_{\alpha} \right) \rhobig_t \left(\1 {-} \Phi_{\alpha} \right)\no 
& + \tfrac{1}{2}\Pi_0 \Lambda_i(\Phi_{\alpha} \rhobig_t \Phi_{\alpha})+
\tfrac{1}{2}\Pi_0 \Lambda_j(\Phi_{\alpha} \rhobig_t \Phi_{\alpha})\Big]-\tr[\Pi_0\rhobig_t]\no
=&\frac{1}{2L}\sum_\alpha\tr[\Pi_0\left(\Lambda_i(\Phi_{\alpha} \rhobig_t \Phi_{\alpha})+
\Lambda_j(\Phi_{\alpha} \rhobig_t \Phi_{\alpha})\right)]\no
\geq& 0,
\end{align}
since $\tr[\Pi\rhobig]\geq 0$ for any projector $\Pi$ and any state $\rhobig$. 

Suppose $\tr[\Pi_0\rhobig_T]<p$. From \eqn{Pi0inc},
$\tr[\Pi_0\rhobig_t]$ can not decrease with increasing $t$. So for all
$t\leq T$,
\begin{align}
\tr[\Pi_0\rhobig_t]< p
\end{align} or equivalently, 
\begin{align}\label{eqn:pioverlap}
\tr[(\I-\Pi_0)\rhobig_t]>1-p. 
\end{align}
Given that  the spectral gap of $H$ is $\epsilon$, we have 
\begin{align}\label{eqn:Hoverlap}
\tr[H \rhobig_t]>\epsilon\tr[(\I-\Pi_0)\rhobig_t].
\end{align}
Combining \eqrange{pioverlap}{Hoverlap} gives
\begin{align}
\tr[ H \rhobig_t]>\epsilon (1-p)\label{eqn:epsilonbound}
\end{align}
for all $t\leq T$.

Copying \eqn{sumboth1}, we have 
\begin{align}
n^2&\geq \sum_{t=0}^{T-1} \frac{2}{L}\tr[ H \rhobig_t].\label{eqn:sumboth2}
\end{align}
Using \eqn{epsilonbound}, we have
\begin{align}
n^2>\frac{(1-p)2\epsilon T}{L}.
\end{align}
Setting $T \geq \frac{n^2L}{2(1-p)\epsilon}$ gives a contradiction. 
Therefore, for $T \geq
\frac{n^2L}{2(1-p)\epsilon}$, we must have $\tr[\Pi_0 \rhobig_T ]\geq p.$
\end{myproof}

%%%%%%%%%%%%%%%%%%%%%%%%%%%%%%%%%%%%%%%%%%%%
%%%%%%%%%%%%%%%%%%%%%%%%%%%%%%%%%%%%%%%%%%%%
%%%%%%%%%%%%%%%%%%%%%%%%%%%%%%%%%%%%%%%%%%%%
%%%%%%%%%%%%%%%%%%%%%%%%%%%%%%%%%%%%%%%%%%%%

\subsection{Difficulties with General Clauses}
\label{sec:dificultys}
We have only been able to prove the {\it Quantum Algorithm} solves Quantum $2$-SAT
in polynomial time when we restrict the form of the clauses. In this
section, we describe what breaks down when more general clauses
are included in the instance. In this section, we assume that for
YES instances, the solution is a product of single-qubit states.
(The instance can be easily pre-processed to deal with any two-qubit
product states in the solution, as in \cite{dBG15}.)
In the YES case, we consider a basis in which the satisfying assignment
takes the form $\ket{0}^{\otimes n}$, so in this basis clauses are of the form: 

\noindent{\bf General Clauses:} 
\be \label{eqn:general2} 
\Phi_\alpha=\proj{\phi_\alpha}, \hspace{.2cm} \textrm{ with } 
\ket{\phi_\alpha}=a_\alpha\ket{01}_{i,j}+
b_\alpha\ket{10}_{i,j}+c_\alpha\ket{11}_{i,j}, 
\ee
 
 \noindent The restricted clauses never cause 
 the expectation of $\oZ^2$ to decrease. 
 However, when we include General Clauses
the expectation of $\oZ^2$ can either increase or decrease 
under the action of $\cT_\alpha$, depending on the state of the system.

Consider a clause of the form $\Phi_\alpha=\proj{\phi_\alpha}$ with
$\ket{\phi_\alpha}=\ket{+1}_{1,2}$
acting on the state $\rhobig=\proj{011}_{1,2,3}$. (Here $\ket{+}$ is
the eigenvector of the $\sigmabig^x$ operator with eigenvalue 1.) One can easily check that
\begin{align}
\tr[\oZ^2\rhobig]=5,\hspace{1cm}\tr[\oZ^2\cT_\alpha(\rhobig)]=4.5,
\end{align}
so the expectation value of $\oZ^2$ decreases.

When there are sufficiently many General Clauses, but still
with a planted product state solution, $\ket{0}^{\otimes
n}$ is the only satisfying state, so one might guess that a good
tracking measure would be the expectation value of $\oZ$, which if it
always increases, would bring the system closer and closer to
$\ket{0}^{\otimes n}$. However, for General Clauses, $\oZ$ can also
increase or decrease, and in fact for $\rhobig$ and $\Phi_\alpha$ as
above,
\begin{align}
\tr[\oZ\rhobig]=2,\hspace{1cm}\tr[\oZ\cT_\alpha(\rhobig)]=1.75.
\end{align}

While in principle the expectation value $\oZ$ and $\oZ^2$ under the
action of $\cT$ can increase or decrease, in numerical experiments, we
find that they always increase.

The analysis in \sec{analysis} was simple because the changes in
expectation value of $\oZ$ and $\oZ^2$ did not depend on the details of the
state of the system, but rather only on the overlap of the state with
the satisfying subspace. With general clauses, the
changes in expectation value of $\oZ$ and $\oZ^2$ depend on the
specifics of the state of the system, making these operators less
useful as tracking devices. 

%%%%%%%%%%%%%%%%%%%%%%%%%%%%%%%%%%%%%%%%%%%%
%%%%%%%%%%%%%%%%%%%%%%%%%%%%%%%%%%%%%%%%%%%%
 
 \section{Conclusions}

%Sch\"oning's analysis showed that a version of the {\it Classical
%Algorithm} solves classical $k$-SAT faster than brute search for
%$k>2$, so we are curious whether the {\it Quantum Algorithm} will be
%able to solve Quantum $k$-SAT faster than Grover search for $k>2$.
%While we suspect that the {\it Quantum Algorithm} will work well with
%unrestricted clauses for
%$k\geq 2$,  it is not even clear whether the algorithm will always
%converge to the satisfying subspace, even given exponential or
%infinite time. A proof of convergence exists when the unitary twirl is
%applied to all qubits that a violated clause acts on \cite{Frank}, but in our
%case, we have only partial twirls.

We study a quantum generalization of Sch\"oning's algorithm. We show
this quantum algorithm can be used to solve Quantum SAT problems. In
particular, we show that it can solve, in polynomial time, Quantum
$2$-SAT with certain restrictions on the clauses.   It is possible
that this quantum algorithm succeeds in polynomial time for  Quantum
$2$-SAT without any restriction on the clauses, but we were not able
to show it.  Inspired by the classical analysis, we track quantities
like the total spin rather than energy. Furthermore, if the
Hamiltonian is also polynomially gapped, the algorithm will produce,
in polynomial time, a state that has high overlap with a satisfying
assignment.

There are many open questions related to this work. Is there
a way to extend our analysis to unrestricted Quantum 2-SAT? 
How does the algorithm perform on Quantum $k$-SAT for $k>2$?
Can the runtime bounds of our algorithm
can be improved?

\section{Acknowledgments} 
We thank Sevag Gharibian for helpful
discussions. EF was funded by NSF grant CCF-1218176 and ARO contract
W911NF-12-1-0486. SK acknowledges support from the
Department of Physics at MIT, iQuISE Igert  grant contract number DGE-0801525,
and the Department of
Defense. KT acknowledges the support from the Erwin Schr{\"o}dinger
fellowship, Austrian Science Fund (FWF): J 3219-N16 and funding
provided by the Institute for Quantum Information and Matter, an NSF
Physics Frontiers Center (NFS Grants PHY-1125565 and PHY-0803371) with
support of the Gordon and Betty Moore Foundation (GBMF-12500028).

%%%%%%%%%%%%%%%%%%%%%%%%%%%%%%%%%%%%%%%%%%%%%%%%%%%%%%
%%%%%%%%%%%%%%%%%%%%%%%%%%%%%%%%%%%%%%%%%%%%%%%%%%%%%%
%%%%%%%%%%%%%%%%%%%%%%%%%%%%%%%%%%%%%%%%%%%%%%%%%%%%%%
%%%%%%%%%%%%%%%%%%%%%%%%%%%%%%%%%%%%%%%%%%%%%%%%%%%%%%

\appendix
\section{Analysis with an Extended Clause Set}
\label{app:extended-clause}

In \sec{analysis}, we showed that the {\it Quantum Algorithm} can decide
Quantum 2-SAT if (in the YES case) the clauses are of a certain form, which we now call
Type I Clauses:\\

\noindent{\bf Type I Clauses:} 
\be \label{eqn:type1} 
\Phi_\alpha=\proj{\phi_\alpha}, \hspace{.2cm} \textrm{ with } 
\ket{\phi_\alpha}=a_\alpha\ket{01}_{i,j}+b_\alpha\ket{10}_{i,j}.
\ee
In this appendix, we will show that the {\it Quantum Algorithm} almost
matches the performance demonstrated in the main body of this paper,
when the restricted clause set is enlarged to include both 
Type I and Type II clauses:\\

\noindent{\bf Type II Clauses:} 
\be\label{eqn:type2} 
\Phi_\alpha=\proj{\phi_\alpha}, \hspace{.2cm} \textrm{ with } 
\ket{\phi_\alpha}=\ket{11}_{i,j}.
\ee

\noindent When all clauses are Type I or Type II, $\ket{0}^{\otimes n}$ is a
satisfying state.

In \sec{analysis} we showed that for $\Phi_\alpha$ a Type I clause,
\begin{align}%\label{eqn:DI}
&\tr[\oZ\cT_\alpha(\rhobig)]-\tr[\oZ\rhobig]=0, \label{eqn:Z1_change}\\
&\tr[\oZ^2\cT_\alpha(\rhobig)]-\tr[\oZ^2\rhobig]=
2\tr[\Phi_{\alpha}\rhobig].\label{eqn:Z2_changeb}
\end{align}
We observe that Type II clauses exhibit the following properties:
\begin{align}\label{eqn:typeII-props}
\Phi_{\alpha} (\sigmabig_{i}^z+\sigmabig_{j}^z) =(\sigmabig_{i}^z+\sigmabig_{j}^z)\Phi_{\alpha} =-2\Phi_{\alpha},\hspace{1cm}
\Phi_{\alpha} \sigmabig_{i}^z\sigmabig_{j}^z=\sigmabig_{i}^z\sigmabig_{j}^z\Phi_{\alpha}=\Phi_{\alpha}.
\end{align}
Applying \eqn{typeII-props} to \eqn{expand-oZ} and to the analysis
in \eqrange{S2divide}{S2final}, we have that for Type II clauses
\begin{align}
\cT^\dagger_\alpha(\oZ)&=\oZ+\Phi_{\alpha},\label{eqn:Z1_change2}\\
\cT^\dagger_\alpha(\oZ^2)&=\oZ^2-2\Phi_{\alpha}+2\sum_{k\neq
i,j}\sigmabig_k^z\Phi_{\alpha}\label{eqn:Z2_change2}.
\end{align}

Combining the effects of Type I and Type II clauses, we have
\begin{align}
\cT^\dagger(\oZ)&=\oZ+\frac{1}{L}\sum_{\alpha\in \textrm{Type II}}\Phi_{\alpha},\label{eqn:Stype2}\\
\cT^\dagger(\oZ^2)&=\oZ^2+\frac{2}{L}\sum_{\alpha\in \textrm{Type I}}\Phi_{\alpha}
+\frac{2}{L}\sum_{\alpha\in \textrm{Type II}}\Big({-}\Phi_{\alpha}+\sum_{k\neq
i,j}\sigmabig_k^z\Phi_{\alpha}\Big)\label{eqn:S2type2}.
\end{align}

When only Type I clauses were present,
 the expectation of $\oZ^2$ could only increase, but
now Type II clauses can cause $\oZ^2$ to decrease. However, whenever
$\rhobig_t$ is not annihilated by all of the clauses, either the expectation
value of $\oZ$ increases (if a Type II clause is measured), or the expectation
value of $\oZ^2$ increases (if a Type I clause is measured). 
We show that in combination, these effects allow us to prove the following result.

\begin{resul}\label{res:both}
Given clauses $\{\Phi_{\alpha}\}$, where $\Phi_{\alpha}$ are Type I or
Type II,
\begin{align}\label{eqn:sumboth}
5n^2\geq\frac{2}{L}\sum_{t=0}^{T-1}
\emph{tr}[H\rhobig_t].
\end{align}
\end{resul}

We first discuss the consequences of \res{both}, and then give the proof.
Note that \eqn{sumboth} is almost identical to  \eqn{sumboth1} and \eqn{sumboth2}. 
The only difference is the factor of $5$ that appears on the left side of \eqn{sumboth}. Thus to
determine what happens when, in the YES case, we restrict to Type I and Type II
clauses, we  need only replace
  \eqn{sumboth1} and \eqn{sumboth2} by \eqn{sumboth}.

 In \res{T3overlap} the number of time steps needed increases by a factor of $5$ to obtain the same
 outcome.
In \res{T1overlap} we use the following transformation, which preserves the statement of the result:
\begin{align}
T&\rightarrow \frac{5f^2L^2n^2}{2},\no
N&\rightarrow T\left(\frac{fL-1}{fL}\right)^3-2fLn.
\end{align}
 Using this transformation in \res{T2overlap}, the outcome is identical when
we choose $f=\max\{22/(5c),1\}$.

We now proof \res{both}:

\begin{proof}
Since $n^2 \geq \tr[\oZ^2\rhobig]\geq 0$, for any state $\rhobig$,
\begin{align}
n^2&\geq \tr[\oZ^2\rhobig_T ]- \tr[\oZ^2\rhobig_0]\no
=&\sum_{t=0}^{T-1}\left( \tr[\oZ^2\rhobig_{t+1}]-\tr[\oZ^2\rhobig_{t}] \right)\no
=&\sum_{t=0}^{T-1}\frac{2}{L}
\left(\sum_{\alpha \in \textrm{Type I}}\tr[\Phi_{\alpha}\rhobig_t]+\sum_{ \alpha \in \textrm{Type II}}
\tr\Big[\Big({-}1+\sum_{k\neq i, j}\sigmabig_k^z\Big)\Phi_{\alpha}\rhobig_t\Big]\right),
\end{align}
where we have used \eqn{S2type2} in the last line.

In the Type II sum, the term $({-}1+\sum_{k\neq i,j}\sigmabig_k^z)$
has eigenvalues that are larger than $-(n-1)$, so using that
$\Phi_\alpha$ and $({-}1+\sum_{k\neq i,j}\sigmabig_k^z)$ commute (they act
on different qubits), we
obtain
\begin{align}\label{eqn:step2} 
n^2\geq\frac{2}{L}\sum_{t=0}^{T-1}
\left(\sum_{\alpha \in \textrm{Type I}}\tr[\Phi_{\alpha}\rhobig_t]-(n-1)\sum_{\alpha \in \textrm{Type II}}\tr[\Phi_{\alpha}\rhobig_t]\right)
\end{align}

We have
\begin{align}
\tr[H\rhobig_t]=\sum_{\alpha \in \textrm{Type II}}\tr[\Phi_{\alpha}\rhobig_t]+
\sum_{\alpha \in \textrm{Type I}}\tr[\Phi_{\alpha}\rhobig_t],
\end{align}
which we can plug into \eqn{step2} to obtain
\begin{align}\label{eqn:step3} 
n^2\geq\frac{2}{L}\sum_{t=0}^{T-1}
\left(\tr[H\rhobig_t]-n\sum_{\alpha \in \textrm{Type II}}\tr[\Phi_{\alpha}\rhobig_t]\right).
\end{align}

We now bound the term involving the Type II clauses. From  \eqn{Stype2} we have
\begin{align}\label{eqn:sbound}
\sum_{t=0}^{T-1}\frac{1}{L}\sum_{\alpha\in \textrm{Type II}}
\tr[\Phi_{\alpha}\rhobig_t]&=
\sum_{t=0}^{T-1}\left(\tr[\oZ\rhobig_{t+1}]-\tr[\oZ\rhobig_t]\right)\no
&=\tr[\oZ\rhobig_T]-\tr[\oZ\rhobig_0]\no
&\leq 2n,
\end{align}
where in the last line we have used that for any $\rhobig$, we have $-n\leq \tr[\oZ\rhobig]\leq n$.
Plugging \eqn{sbound} into \eqn{step3}, we have 
\begin{align}\label{eqn:step4} 
5n^2\geq\frac{2}{L}\sum_{t=0}^{T-1}
\tr[H\rhobig_t].
\end{align}
\qedsymbol
\end{proof}

%%%%%%%%%%%%%%%%%%%%%%%%%%%%%%%%%%%%%
\bibliographystyle{unsrt}
\bibliography{Q-Schoening}
\end{document}